\documentclass[12pt]{article}
\usepackage[mathscr]{eucal}
\usepackage{amsmath,amsfonts,amssymb,amsthm}
\usepackage[matrix,arrow]{xy}
\usepackage{times}
\usepackage{float}
\usepackage{graphicx}
\usepackage{epstopdf}
\usepackage{multibox}
\usepackage{dcolumn}
\usepackage{indentfirst}

\voffset-1.5cm \hoffset-.1cm \textheight45\baselineskip
\advance\textheight\topskip \textwidth 36.5pc \oddsidemargin 16pt
\evensidemargin 16pt \addtolength{\parskip}{4pt}
\addtolength{\textfloatsep}{\baselineskip}

\raggedbottom

\addtolength{\parskip}{4pt}
\addtolength{\textfloatsep}{\baselineskip}

\numberwithin{equation}{section} \makeatletter
\@addtoreset{equation}{section}

\newtheorem{prop}{Proposition}[section]

\hfuzz=1.5pt
\title{A remark on the Lagrange structure
 of the unfolded field theory}
\date{}
\author{D. Kaparulin, S. Lyakhovich and A. Sharapov\\[2mm]
\small \emph{Department of Quantum Field Theory, Tomsk State University,}\\
\small \emph{Lenin ave. 36,  Tomsk 634050, Russia}\\
\small E-mails: dsc@phys.tsu.ru, sll@phys.tsu.ru,
sharapov@phys.tsu.ru}

\begin{document}
\maketitle

\begin{abstract}

Any local field theory can be equivalently reformulated in the
so-called unfolded form. General unfolded equations are
non-Lagrangian even though the original theory is Lagrangian.
Using the theory of a scalar field as a basic example, the concept
of Lagrange anchor is applied to perform a consistent
path-integral quantization of unfolded dynamics. It is shown that
the unfolded representation for the canonical Lagrange anchor of
the d'Alembert equation inevitably involves an infinite number of
space-time derivatives.

\end{abstract}

\section{Introduction}

Any system of local field equations can be equivalently
reformulated in the so-called unfolded form \cite{V0}, \cite{V1}.
The field content of the unfolded formulation includes, besides
the original fields, an infinite collection of auxiliary fields
absorbing all the space-time derivatives of the original fields.
The unfolded equations are generally not Lagrangian even though
the  system has been Lagrangian before unfolding. One of the
remarkable achievements of the unfolded formalism is that it is
the only known form of the interacting higher-spin field theories,
see \cite{V2}, \cite{BCIV} for reviews. On the other hand, the
absence of an action functional is usually regarded a serious
disadvantage of the unfolded dynamics, especially when the
quantization problem is the issue.

In \cite{KLS}, a new quantization method was proposed for not
necessarily   Lagrangian systems. The main geometric ingredient of
the method  is the notion of a Lagrange anchor. The existence of a
compatible Lagrange anchor appears to be a less restrictive
condition for the field equations than the requirement to admit a
variational formulation. Given a Lagrange anchor, the system can
be quantized in three different ways: (i) by converting the
original $d$-dimensional field theory into an equivalent
topological Lagrangian model in $d+1$ dimensions \cite{KLS}, (ii)
by constructing a generalized Schwinger-Dyson equation, which
defines the generating functional of Green's functions \cite{LS2},
(iii) by imbedding the original dynamics into an augmented
Lagrangian model in $d$ dimensions \cite{LS3}. The Lagrange anchor
also connects symmetries to conservation laws \cite{KLS2},
extending the Noether theorem beyond the class of Lagrangian
dynamics. Many explicit examples of Lagrange anchors are known for
various non-Lagrangian field theories \cite{KLS}-\cite{BG}.

In this paper we study the structure of Lagrange anchors
compatible with the unfolded form of dynamics. In particular, we
address the case where the unfolded field theory admits an
equivalent Lagrangian formulation with finite number of fields. In
this case, the general structure of the Lagrange anchor was first
established in \cite{KMaster}. In the present paper, to avoid
technicalities and to illuminate the main properties of the
unfolded  Lagrange anchors, we focus on the theory of a single
scalar field. The general constructions and conclusions, being
derived for this quite an elementary model, remain applicable to a
much broader class of dynamics.

The paper is organized as follows. The next section starts with
recalling the free-diffe\-ren\-tial-algebra approach to the
formulation of classical field theory and its relation to the
unfolded formalism. Then we briefly describe the on-shell and
off-shell unfolded representations for the d'Alembert equation.
Section 3 contains a brief exposition of basic definitions,
constructions and motivations concerning the Lagrange anchor and
its application to the path-integral quantization of
(non-)Lagrangian field theories. In Section 4, we identify a
natural Lagrange anchor for the unfolded representation of the
d'Alembert equation. It is the anchor which provides the
equivalence of the quantized unfolded model to the standard
path-integral quantization of the scalar field. In Section 5, we
prove a no-go  theorem for the existence of Lagrange anchors
without space-time derivatives in the on-shell unfolded formalism.
In Section 6, we summarize the results and discuss possible
applications of the proposed Lagrange anchor construction to the
unfolded form of field theory.

\section{Unfolded representation of the d'Alembert equation}

We start with a very brief review of the free-differential-algebra
(FDA) approach to  the formulation of general covariant field
theories \cite{AF}, \cite{FG}. In the FDA formalism one deals with
a multiplet  $\{w^a\}$ of differential forms of various degrees
on the space-time manifold $M$. The field equations are of the
form
\begin{equation}\label{Q}
d w^a=Q^a(w)\,.
\end{equation}
The r.h.s. of these equations, $Q$'s, are given by exterior
polynomials in $w$'s. The identity $d^2=0$ for the exterior
differential results in the following compatibility conditions for
equations (\ref{Q}):
\begin{equation}\label{CC}
    Q^a\frac{\partial Q^b}{\partial w^a}=0\,.
\end{equation}
These conditions are assumed to be identically satisfied for all
$w$'s. Equation  (\ref{CC}) has  the following geometric
interpretation. The forms $w^a$ can be treated as coordinates on
an $\mathbb{N}$-graded manifold $\mathcal{M}$ with the degrees of
coordinates being the corresponding form-degrees. Then the r.h.s
of (\ref{Q}) defines a \textit{homological vector field}
$Q=Q^a\partial_a$ on $\mathcal{M}$, i.e., a smooth vector field of
degree $1$ squaring to zero (\ref{CC}). Besides an explicit
invariance under the diffeomorphisms of $M$, equations (\ref{Q})
enjoy the gauge symmetries
\begin{equation}\label{GT}
    \delta_\varepsilon w^a=d\varepsilon^a-\frac{\partial Q^a}{\partial
    w^b}\varepsilon^b\,,
\end{equation}
where the gauge parameters $\varepsilon^a$ are the forms of
appropriate degrees.

In the field theory, it can be useful to consider non-free
differential algebras, when the differential equations (\ref{Q})
are supplemented  by a set of algebraic constraints
\begin{equation}\label{t}
    T^A(w)=0\,.
\end{equation}
Being exterior  polynomials in $w$'s, the constraints generate an
ideal $J\subset \Lambda(M)$ in the algebra of all differential
forms on $M$. To avoid further compatibility conditions, $J$ is
assumed to be a differential ideal, i.e., $dJ\subset J$. On
account of (\ref{Q}), the last requirement amounts   to
\begin{equation}\label{DI}
Q^a\frac{\partial T^A}{\partial w^a}= U^A_B T^B
\end{equation}
for some exterior polynomials $U^A_B(w)$. Under certain regularity
conditions, the algebraic constraints (\ref{t}) define a smooth
submanifold $\mathcal{N}\subset \mathcal{M}$ and condition
(\ref{DI}) says that the homological vector field $Q$ is tangent
to $\mathcal{N}$. Restricting the vector field $Q$ to
$\mathcal{N}$, e.g. by explicitly solving the algebraic
constraints, we arrive to another FDA with fewer generators.

By unfolding (or unfolded representation of) a local field theory,
one usually means a reformulation of the original field equations
in the form (\ref{Q}), (\ref{t}). Such a reformulation usually
involves an infinite number of auxiliary fields. In what follows
we will distinguish between \textit{off-shell} and
\textit{on-shell} unfolded representations. In the former case,
there are  both the differential equations (\ref{Q}) \textit{and}
the algebraic constraints  (\ref{t}) (also called shell
conditions). In the on-shell formulation, the fields $w$ are
unconstrained, that can be a result of solving the algebraic
constraints of the off-shell formulation.

In this paper we will mostly consider the unfolded representation
for the free, massless, scalar field theory in $d$-dimensional
Minkowski space, though our constructions and conclusions remain
applicable to a much broader class of field theories. The standard
formulation of the scalar field theory is based on the d'Alembert
equation
\begin{equation}\label{DA}
    \Box \varphi=0\,.
\end{equation}

Loosely, the idea behind constructing the unfolded representation
(\ref{Q}), (\ref{t}) for a given system of differential equations
is to introduce an infinite collection of auxiliary fields
absorbing all the derivatives of the original fields. This
procedure is known as the infinite jet prolongation \cite{EDS}. In
terms of the extended set of fields, the system of differential
equations (\ref{Q}) appears as the definition of the jet
prolongation (this is called a contact system), whereas the
original field equations along with all their differential
consequences turn into the algebraic constraints (\ref{t}) in the
jet space. To keep the system explicitly invariant under
diffeomorphisms, the unfolded representation is formulated in a
general frame, even though the space-time $M$ can be flat (as is
supposed throughout this paper). As usual, this implies
introduction of a vielbein $e^a\in \Lambda^1(M)$ and a compatible
Lorentz connection $\omega^{ab}=-\omega^{ba}\in \Lambda^{1}(M)$.
All the Lorentz indices $a,b,c, ...$ are raised and lowered with
Minkowski metric $\eta_{ab}$.

The unfolded version of the d'Alembert equation (\ref{DA}) reads
\cite{BCIV}, \cite{SV}:
\begin{equation}\label{ew}
De^a=0\,,\qquad d\omega^a{}_b=-\omega^a{}_c\wedge\omega^c{}_b\,,
\end{equation}
\begin{equation}\label{phi}
D\varphi_{a_1\cdots a_s}=e^{a}\varphi_{aa_1\cdots a_s}\,,\qquad
s=0,1,\ldots\,,
\end{equation}
\begin{equation}\label{tphi}
    \varphi{}^a{}_{a a_1\cdots a_s}=0\,,\qquad s=0,1,\ldots\,.
\end{equation}
Here $D=e^a\wedge \nabla_a=d+\omega$ is the Lorenz-covariant
differential and $\{\varphi_{a_1\cdots a_s}\}$ is an infinite
collection of fully symmetric Lorentz tensors.

Let us comment on the structure of the unfolded equations. The
first equation in (\ref{ew}) is the usual zero-torsion condition.
It allows one to express the components of the Lorentz connection
$\omega^{ab}$ via the vielbein field $e^a$. The second equation in
(\ref{ew}) is the zero-curvature condition for the Lorentz
connection. So, there is a coordinate system $\{x^a\}$ on $M$ in
which $\omega^{ab}=0$ and $e^a=dx^a$. This also means that the
1-form fields $\omega^{ab}$ and $e^a$ are pure gauge. Equations
(\ref{phi}) can be viewed as a reparametrization invariant form of
the \emph{contact system} \cite{EDS}. By taking linear
combinations these equations can be rearranged into the form
\begin{equation}\label{tt}
    \varphi_{a_1\ldots
    a_s}=\nabla_{a_1}\cdots\nabla_{a_s}\varphi\,.
\end{equation}
So, the contact equations just define the auxiliary fields
$\varphi_{a_1\cdots a_s},\, s>0$, in terms of successive
derivatives (infinite jet) of a single scalar field $\varphi$.

Substituting (\ref{tt}) into (\ref{tphi}), we get the sequence of
equations
\begin{equation}\label{}
    \nabla_{a_1}\cdots \nabla_{a_s}\Box \varphi=0\,,\qquad
    \Box=\nabla^a\nabla_a\,, \qquad s=0,1,2, \ldots ,
\end{equation}
which is clearly equivalent to the original d'Alembert's equation
(\ref{DA}). The unfolded system is obviously consistent, because
applying $d$ to both sides of (\ref{ew}), (\ref{phi}) and
(\ref{tphi}) does not lead to any new condition. The differential
algebra underlying the unfolded formulation above is generated by
the finite set of 1-forms $\{e^a$, $\omega^{ab}\}$ and the
infinite collection of 0-forms $\{\varphi_{a_1\cdots a_s}\}$. In
view of the algebraic constraints (\ref{tphi}), it is not a free
algebra. To pass from the off-shell to on-shell formulation  one
only has to assume the Lorentz tensors $\varphi_{a_1\cdots a_s}$
to be traceless.

For the field equations (\ref{ew})-(\ref{tphi}) the general gauge
transformations (\ref{GT}) read
\begin{equation}\label{GSvLc}
\begin{array}{l}
\displaystyle \delta_\varepsilon \varphi_{a_1\cdots
a_s}=\varepsilon^{a}\varphi_{aa_{1}\cdots
a_{s}}-\sum_{i=1}^{s}\varepsilon_{a_i}{}^{a}\varphi_{a_1\cdots a_{i-1}a a_{i+1}\cdots a_s}\,,\\[5mm]
\displaystyle \delta_\varepsilon
e^{a}=D\varepsilon^a-\varepsilon^{ab}e_b,\qquad \delta_\varepsilon
\omega^{ab}=D\varepsilon^{ab}\,.
\end{array}\end{equation}
The gauge parameters $\varepsilon^a$ and
$\varepsilon^{ab}=-\varepsilon^{ba}$ correspond, respectively, to
the general coordinate transformations and the local Lorentz
rotations.

Notice that the subsystem  (\ref{ew}), (\ref{phi}), being
considered separately from constraints (\ref{tphi}), remains
formally consistent. This truncated system, however, is
dynamically empty: It just expresses the higher rank Lorentz
tensors $\varphi_{a_1\cdots a_s}$ in terms of the unconstrained
scalar field $\varphi$, and its derivatives, according to
({\ref{tt}}). As a result, the truncated system of equations
possesses the extra gauge symmetry of the form
\begin{equation}\label{gtr}
    \delta_{\varepsilon}\varphi_{a_1\cdots
    a_s}=\nabla_{a_1}\cdots\nabla_{a_s}\varepsilon\,.
\end{equation}
Combining these transformations  with (\ref{GSvLc}), one can gauge
out all the fields. Also notice that the transformations
(\ref{gtr}) involve $s$ derivatives of the gauge parameter for the
field $\varphi_{a_1\cdots a_s}$. From this viewpoint, these
transformations are local, as the number of derivatives is finite
for every single field. However, the order of the derivatives is
unbounded, because it is growing with the order of the jet,
whereas the jet prolongation is infinite in the unfolded
formalism. We will return to this point in Section 4. The
algebraic condition (\ref{tphi}) breaks the gauge symmetry
(\ref{gtr}), so the unfolded system gets back its physical degrees
of freedom and becomes equivalent  to the d'Alembert equation as
soon as the Lorentz tensors $\varphi_{a_1\cdots a_s}$ are
constrained to be traceless.

It is  worth noting that the unfolded system
(\ref{ew})-(\ref{tphi}) is not Lagrangian in any dimension unlike
the d'Alembert equation (\ref{DA}). Furthermore, it remains
non-Lagrangian even if one omits equations (\ref{ew}),
(\ref{tphi}) by imposing gauge fixing conditions on the vielbein
and the Lorentz connection and assuming the auxiliary fields
$\varphi_{a_1\cdots a_s}$ to be traceless. This fact may seem
restricting the use of the unfolded formalism especially when it
comes to quantizing the classical fields. Actually, the Lagrangian
formalism is a special case of a more general concept called a
Lagrange structure \cite{KLS}. The existence of a compatible
Lagrange structure is much less restrictive for the field
equations than the requirement to admit a Lagrangian. Given a
Lagrange structure, it is still possible to consistently quantize
the classical theory \cite{KLS}, \cite{LS2}, \cite{LS3}, even
though it admits no Lagrangian. The Lagrange structure also
connects symmetries with conservation laws \cite{KLS2}. The next
section contains a simplified exposition of the Lagrange structure
and the corresponding quantization method. This description is
sufficient for many field-theoretical applications, including the
issues of this paper. A more systematic and rigorous presentation
can be found in \cite{KLS}-\cite{KLS2}.

\section{A generalized Schwinger-Dyson equation}

In the covariant formulation of quantum field theory one usually
studies the path integrals of the form
\begin{equation}\label{PI}
    \langle \mathcal{O}\rangle =\int [d\varphi]
    \,\mathcal{O}[\varphi]\,e^{\frac i{\hbar}S[\varphi]}\,.
\end{equation}
After normalization, the integral defines the quantum average of
an observable $\mathcal{O}[\varphi]$ in the theory with action
$S[\varphi]$.  It is believed that evaluating the path integrals
for various observables $\mathcal{O}$, one can extract all
physically relevant information about the quantum dynamics of the
model.

The  functional $\Psi [\varphi]=e^{\frac i{\hbar}S[\varphi]}$,
weighting the contribution of a particular field configuration
$\varphi$ to the quantum average, is known as the Feynman
probability amplitude on the configuration space of fields. This
amplitude can be defined as a unique (up to a normalization
factor) solution to the Schwinger-Dyson (SD)
equation\footnote{Here we use the condensed notation, so that the
partial derivatives with respect to fields should be understood as
variational ones.}
\begin{equation}\label{SDf}
   \left (\frac{\partial S}{\partial\varphi^i}+i\hbar\frac{\partial}{\partial
    \varphi^i}\right) \Psi[\varphi]=0\,.
\end{equation}
Performing the Fourier transform from the fields $\varphi$ to
their sources $\bar\varphi$, we can bring (\ref{SDf}) to a more
familiar form
\begin{equation}\label{SD}
\left( \frac{\partial
S}{\partial\varphi^i}(\widehat{\varphi})-\bar\varphi_i\right)Z[\bar\varphi]=0\,,\qquad
\widehat{\varphi}{}^i\equiv i\hbar\frac{\partial}{\partial
\bar\varphi_i}\,,
\end{equation}
where
\begin{equation}\label{Z}
    Z[\bar\varphi]=\int [d\varphi] e^{\frac{i}{\hbar}(S-\bar\varphi\varphi)}
\end{equation}
is the generating functional of Green's functions.

The following observations provide guidelines for the
generalization of the Schwinger-Dyson equation to non-Lagrangian
field theory.

\vspace{2mm}

    $(i)$ Although the Feynman probability amplitude involves an action functional,
    the SD  equations (\ref{SDf})
    contain solely the classical field equations, not the action as such.

\vspace{2mm}

    $(ii)$ In the classical limit $\hbar\rightarrow 0$, the second
    term in the SD equation (\ref{SDf}) vanishes, and the
    Feynman probability amplitude $\Psi$ turns into the Dirac distribution supported
    at the classical solutions to the field equations. Formally, $\Psi[\varphi]|_{\hbar\rightarrow 0} \sim
    \delta[\partial_i
    S]$ and one can think of the last expression as the classical probability amplitude.

\vspace{2mm}

    $(iii)$ It is quite natural to treat the sources $\bar\varphi$ as the momenta
    canonically conjugate  to the fields $\varphi$,
    so that the only non-vanishing Poisson brackets are
    $\{\varphi^i,\bar\varphi_j\}=\delta^i_j$.
    Then, one can regard the SD operators
\begin{equation}\label{SDOP}
\frac{\partial
S}{\partial\varphi^i}+i\hbar\frac{\partial}{\partial
    \varphi^i}
\end{equation}
involved in (\ref{SDf})  as resulting from the canonical
quantization of the first class constraints
\begin{equation}\label{CLA}
    \partial_i S[\varphi]-\bar{\varphi}_i\approx 0
    \end{equation}
    on the phase space of fields and sources. Upon this
    interpretation, the Feynman probability amplitude describes  a unique physical state
    of a first-class constrained theory.
    This state is unique because  the ``number'' of the first class constraints (\ref{CLA}) is equal  to the
    ``dimension'' of the configuration space of fields. Quantizing the constrained system (\ref{CLA}) in
    the   momentum representation yields  the SD equation (\ref{SD}) for
    the partition function $Z[\bar\varphi]$.

\vspace{2mm}

The above interpretation of the SD equations as operator first
class constraints on a physical wave-function suggests a direct
way to their generalization. Consider a set of field equations
\begin{equation}\label{T-eq}
T_a(\varphi^i)=0\,,
\end{equation}
which do not necessarily follow from the variational principle. In
this case, the (discrete parts of) superindices $a$ and $i$ may
run over different sets. Proceeding from the heuristic arguments
above, we can take the following ansatz for the $\varphi
\bar\varphi$-symbols of the Schwinger-Dyson operators:
\begin{equation}\label{TT}
\mathcal{T}_a=T_a(\varphi)-V_a^i(\varphi)\bar\varphi_i+
O(\bar\varphi^2)\,.
\end{equation}
The symbols are defined as formal power series in  sources
$\bar\varphi$ with leading terms being the classical equations of
motion. Requiring the Hamiltonian constraints
$\mathcal{T}_a\approx 0$ to be first class, i.e.,
\begin{equation}\label{inv}
    \{\mathcal{T}_a, \mathcal{T}_b\}=U_{ab}^c \mathcal{T}_c \,,\qquad
    U_{ab}^c(\varphi,\bar\varphi)=C^c_{ab}(\varphi)+ O(\bar\varphi)\,,
\end{equation}
we obtain an infinite set of relations on the expansion
coefficients of $\mathcal{T}_a$ in the powers of sources. In
particular, verifying the involution relations (\ref{inv}) up to
zero order in $\bar\varphi$, we find
\begin{equation}\label{anchor}
    V_a^i\partial_iT_b-V_b^i\partial_iT_a=C_{ab}^c T_c\,.
\end{equation}
The value $V_a^i(\varphi)$ defined by (\ref{anchor}) is called the
\emph{Lagrange anchor}. The entire sequence of the expansion
coefficients defines the \emph{Lagrange structure}.

For variational field equations, $T_a=\partial_i S$, one can set
the Lagrange anchor to be the unit matrix $V_a^i=\delta^i_a$. This
choice results in the standard Schwinger-Dyson operators
(\ref{SDOP}) obeying the abelian involution relations. Generally,
the Lagrange anchor may be field-dependent and/or noninvertible.
If the Lagrange anchor is invertible (in which case the number of
equations must coincide with the number of fields), then the
operator $V^{-1}$ plays the role of integrating multiplier in the
inverse problem of calculus of variations. So, the existence of
the invertible Lagrange anchor amounts to the existence of action.
The other extreme choice, $V=0$, is always possible and
corresponds to the classical probability amplitude
$\Psi[\varphi]\sim\delta[T_a(\varphi)]$ supported at the classical
solutions.

In the non-Lagrangian case, the constraints (\ref{TT}) are not
generally the whole story. The point is that the number of
(independent) field equations can happen to be less than the
dimension of the configuration space of fields. In that case, the
field equations (\ref{T-eq})  do not specify a unique solution
with prescribed boundary conditions or, stated differently, the
system enjoys a gauge symmetry  generated by an on-shell
integrable vector distribution
$R_\alpha=R_\alpha^i(\varphi)\partial_i$ such that
\begin{equation}\label{}
R_\alpha^i\partial_i T_a=U_{\alpha a}^b T_b \,,\qquad [R_\alpha,
R_\beta]=U_{\alpha\beta}^\gamma R_\gamma + T_a
U_{\alpha\beta}^{ai}\partial_i
\end{equation}
for some structure functions $U^b_{\alpha a}(\varphi)$ and
$U_{\alpha\beta}^{ai}(\varphi)$.
 To take the gauge invariance into account at quantum level, one has to
 impose additional first class constraints on the fields and sources. Namely,
\begin{equation}\label{R}
    \mathcal{R}_\alpha =R_\alpha^i(\varphi)\bar\varphi_i+O(\bar\varphi^2)\approx 0\,.
\end{equation}
The leading terms of these constraints coincide with the  $\varphi
\bar\varphi$-symbols of the gauge symmetry generators and the
higher orders in $\bar\varphi$ are determined from the requirement
the Hamiltonian constraints $\mathbb{T}_I=(\mathcal{T}_a,
\mathcal{R}_\alpha)$ to be in involution \footnote{For a
Lagrangian gauge theory we have
$\mathcal{T}_i=\partial_iS-\bar\varphi_i$ and $\mathcal{R}_\alpha
=-R^i_\alpha \mathcal{T}_i=R_\alpha^i \bar\varphi_i$. In this
case, one may omit the ``gauge'' constraints
$\mathcal{R}_\alpha\approx 0$ as they are given by linear
combinations of the ``dynamical'' constraints
$\mathcal{T}_i\approx 0$.}. With all the gauge symmetries
included, the constraint surface $\mathbb{T}_I\approx 0$ is proved
to be a Lagrangian submanifold in the phase space of fields and
sources and the gauge invariant probability amplitude is defined
as a unique solution to the \textit{generalized SD equation}
\begin{equation}\label{SDE}
    {\widehat{\mathbb{T}}}{}_I\Psi=0\,.
\end{equation}
The last formula is just the definition of a physical state in the
Dirac quantization method of constrained dynamics. A more
systematic treatment of the generalized SD equation within the
BRST formalism can be found in \cite{KLS}, \cite{LS2}, \cite{LS3}.

In what follows we will refer to the first class constraints
$\mathbb{T}_I \approx 0$ as the \textit{Schwinger-Dyson extension}
of the original equations of motion (\ref{T-eq}). Notice that the
defining relations (\ref{anchor}) for the Lagrange anchor together
with the ``boundary conditions'' (\ref{TT}) and (\ref{R}) do not
specify a unique SD extension for a given system of field
equations. One part of the ambiguity is related to canonical
transformations in the phase space of fields and sources. If the
generator $G$ of  a canonical transform is at least quadratic in
sources,
\begin{equation}\label{}
G=\frac12 G^{ij}(\varphi)\bar\varphi_i\bar\varphi_j +
O(\bar\varphi^3)\,,
\end{equation}
then the transformed constraints
\begin{equation}\label{G-tr}
\begin{array}{l}
    \mathcal{T}'_a=e^{\{G,\,\,\cdot\,\,\}}\mathcal{T}_a=T_a+ (V_a^i
    +G^{ij}\partial_jT_a)\bar\varphi_i + O(\bar\varphi^2)\,,\\[5mm]
    \mathcal{R}'_\alpha =
    e^{\{G,\,\,\cdot\,\,\}}\mathcal{R}_\alpha=R_\alpha^i\bar\varphi_i+O(\bar\varphi^2)
\end{array}
\end{equation}
are in involution and start  with the same equations of motion and
gauge symmetry generators. Another ambiguity stems from changing
the basis of the constraints:
\begin{equation}\label{U-tr}
\begin{array}{l}
    \mathcal{T}''_a=U^b_a\mathcal{T}_b +U_a^\alpha \mathcal{R}_\alpha = {T}_a +(V_a^i+ A_a^{bi}T_b+B_a^\alpha
    R_\alpha^i)\bar\varphi_i+O(\bar\varphi^2)\,,\\[5mm]
    \mathcal{R}''_\alpha=U_\alpha^\beta
    \mathcal{R}_\beta+U_\alpha^a
    \mathcal{T}_a=R_\alpha^i\bar\varphi+O(\bar\varphi^2)\,,
    \end{array}
\end{equation}
where
\begin{equation}\label{}
\begin{array}{ll}
    U^b_a=\delta_a^b+A_a^{bi}\bar\varphi_i+O(\bar\varphi^2)\,,&\qquad
    U_a^\alpha=B_a^\alpha + O(\bar\varphi)\,,\\[5mm]
    U_\alpha^\beta=\delta_\alpha^\beta+O(\bar\varphi)\,,&\qquad
    U_\alpha^a=O(\bar\varphi)\,.
    \end{array}
\end{equation}
Combining (\ref{G-tr}) with (\ref{U-tr}) we see that the Lagrange
anchor is defined modulo the equivalence relations
\begin{equation}\label{eqw}
V_a^i \sim V_a^i+ T_b A_a^{bi}+B_a^\alpha R_\alpha^i
+G^{ij}\partial_jT_a\,.
\end{equation}
A more rigorous treatment of the Lagrange structure and
generalized Schwinger-Dyson equation is provided by the
corresponding BRST formalism \cite{KLS}, \cite{LS2}. From the
viewpoint of the  BRST theory, all transformations (\ref{G-tr})
and (\ref{U-tr}) are induced by canonical transforms in the
ghost-extended phase space, and the equivalence classes of
Lagrange anchors are identified with certain classes of the BRST
cohomology.

\section{The Schwinger-Dyson extension of unfolded dynamics}

In this section, the general quantization procedure described
above is applied to the scalar field theory in the unfolded
formulation.

Imposing the requirements of (i) space-time locality, (ii)
Poincar\'e covariance, and (iii) linearity in fields and sources,
one can see that the most general SD extension of the original
d'Alembert's equation (\ref{DA}) reads
\begin{equation}\label{SDSc}
    \Box \varphi +\sum_{n=0}^Na_n\Box^n \bar\varphi\approx 0\,,
\end{equation}
where $a_n$ are constants. These constraints  are clearly in
abelian involution. The local canonical transform
\begin{equation}\label{}
    \varphi \rightarrow \varphi - \sum_{n=1}^N
    a_n\Box^{n-1}\bar\varphi\,,\qquad \bar\varphi \rightarrow
    \bar\varphi\,,
\end{equation}
brings the first class constraints (\ref{SDSc}) to the form
\begin{equation}\label{CLASF}
    \Box \varphi +a_0 \bar\varphi\approx 0\,.
\end{equation}
This means that the most general Lagrange anchor (\ref{SDSc}) is
equivalent (up to normalization) to the canonical one (\ref{CLA}),
whenever $a_0\neq 0$. If $a_0=0$ the anchor is trivial.

Unlike the d'Alembert equation, the unfolded system
(\ref{ew})-(\ref{tphi}) is non-Lagrangian. Therefore, one cannot
use the conventional quantization prescriptions. The unfolded
equations, however, admit a quite natural SD extension, which can
be viewed as an ``unfoldization'' of the constraints
(\ref{CLASF}). To describe this extension, following the general
procedure of Section 3, we introduce the sources $\bar e_a$ and
$\bar\omega_{ab}=-\bar\omega_{ba}$ for the vielbein and connection
fields to be $(d-1)$-forms on $M$. The fully symmetric Lorentz
tensors $\bar\varphi^{a_1\cdots a_s}$ with values in $d$-forms are
introduced as the sources for $\varphi_{a_1\cdots a_s}$. The
Poisson brackets in the phase space of fields and sources are
defined by the following symplectic form:
\begin{equation}\label{}
\Omega=\int_M \big(\delta e^a\wedge  \delta \bar e_a +\delta
\omega^{ab}\wedge  \delta \bar\omega_{ab}+ \sum_{s=0}^\infty
\delta\varphi_{a_1\cdots a_s}\wedge \delta\bar\varphi^{a_1\cdots
a_s}\big)\,.
\end{equation}
Here the sign $\wedge$ stands for both the exterior product of
differential forms on $M$ and the exterior product of variational
differentials.

Let us first consider the off-shell formulation. In this case, the
differential equations (\ref{ew}) and (\ref{phi}) remain intact,
while the algebraic constraints (\ref{tphi}) are replaced by the
expressions
\begin{equation}\label{dtphi}
    \varphi^a{}_{aa_1\cdots a_s}+ \nabla_{a_1}\cdots \nabla_{a_s} R\approx
    0\,,
\end{equation}
where
\begin{equation}\label{r}
R=\ast\sum_{n=0}^{\infty} (-1)^n \nabla_{a_1}\cdots
\nabla_{a_n}\bar\varphi^{a_1\cdots a_n}\,,
\end{equation}
and the Hodge operator $\ast: \Lambda^p(M)\rightarrow
\Lambda^{d-p}(M)$ is defined with respect to the space-time
metric. Since the unfolded system (\ref{ew})-(\ref{tphi}) is gauge
invariant we should also add the constraints associated to the
gauge symmetry generators (\ref{GSvLc}). These are given by
\begin{equation}\label{RR}
\begin{array}{l}
\displaystyle
    D \bar e_a-\sum_{s=0}^\infty \varphi_{aa_1\cdots
    a_s}\bar\varphi^{a_1\cdots a_s}\approx
    0\,,\\[5mm]\displaystyle
    D\bar
    \omega^{ab}-e^{[a}\wedge \bar e^{b]}+\sum_{s=0}^\infty (s+1)\bar\varphi^{a_1\cdots
    a_s[a}\varphi^{b]}{}_{a_1\cdots a_s}\approx 0\,.
    \end{array}
\end{equation}

It is clear that the Hamiltonian constraints (\ref{ew}),
(\ref{phi}), and (\ref{RR}) are of the first class. Then, it
remains to prove the involution of constraints (\ref{dtphi}) among
themselves and with the other constraints. Observe that the SD
constraints (\ref{dtphi}) are explicitly covariant under the
action of diffeomorphisms and the local Lorentz transformations
generated by  (\ref{GSvLc}). So, the Poisson brackets of
(\ref{RR}) and (\ref{dtphi}) must be proportional to
(\ref{dtphi}). To prove  the involution of (\ref{ew}), (\ref{phi})
and (\ref{dtphi}), we notice that the Hamiltonian flow
$\{R,\,\cdot\,\}$ generates the gauge transformations (\ref{gtr})
for the truncated system (\ref{ew}), (\ref{phi}). Hence, the
constraints (\ref{dtphi}) Poisson commute with (\ref{ew}) and
(\ref{phi}). Finally, we know that the first class constraints
(\ref{phi}) are equivalent to (\ref{tt}). Taking into account
(\ref{tt}), we can replace (\ref{dtphi}) with the equivalent set
of constraints
\begin{equation}\label{ttt}
    \nabla_{a_1}\cdots \nabla_{a_s}\left((\Box \varphi
    +\ast \bar\varphi)+\ast\sum_{n=1}^\infty (-1)^n\nabla_{a_1}\cdots
    \nabla_{a_n}\bar\varphi^{a_1\cdots a_n}\right)\approx 0\,.
\end{equation}
The abelian involution of the constraints (\ref{ttt}), and hence
(\ref{dtphi}), is now obvious.

Let us summarize the interim result. The system of equations
(\ref{ew}), (\ref{phi}), (\ref{dtphi}), (\ref{RR}) defines the
complete set of the Schwinger-Dyson constraints for the off-shell
unfolded formulation of scalar field's dynamics.  Below,  we will
see that the corresponding probability amplitude leads to the
quantum theory, which is completely equivalent to the conventional
theory based on the d'Alembert equation.

Two steps are needed to put the unfolded system (\ref{ew}),
(\ref{phi}), (\ref{dtphi}), (\ref{RR}) on-shell. First, we express
the traces of the tensor fields $\varphi_{a_1\cdots a_s}$ via the
sources according to (\ref{dtphi}) and substitute them into the
constraints (\ref{ew}), (\ref{phi}), (\ref{RR}). Constraints
(\ref{dtphi}) are now identically satisfied, and the remaining SD
constraints are still in involution. As  the second step, we set
the trace part of the sources $\bar\varphi^{a_1\cdots a_s}$ to
zero. Again, this cannot break involution as the constraints
derived at the first step do not involve the traces of the fields
$\varphi_{a_1\cdots a_s}$. Denote by ${\phi}_{a_1\cdots a_s}$ and
${\bar\phi}{}^{a_1\cdots a_s}$ the traceless parts of the tensors
$\varphi_{a_1\cdots a_s}$ and $\bar\varphi^{a_1\cdots a_s}$. Then
we arrive at the following SD constraints:
\begin{equation}\label{USF-on-An}
  De^a\approx 0\,,\qquad d\omega^a{}_b+\omega^a{}_c\wedge\omega^c{}_b\approx 0\,,\qquad D\Phi_{a_1\cdots a_s}-e^a\Phi_{a a_1\cdots a_s}\approx
0\,,
\end{equation}
\begin{equation}\label{GON1}
    D \bar e_a-\sum_{s=0}^\infty \Phi_{aa_1\cdots
    a_s}\bar\phi^{a_1\cdots a_s}\approx 0\,,
\end{equation}
\begin{equation}\label{GON3}
    D\bar
    \omega^{ab}-e^{[a}\wedge \bar e^{b]}+\sum_{s=0}^\infty(s+1)\bar\phi^{a_1\cdots
    a_s[a}\phi^{b]}{}_{a_1\cdots a_s}\approx 0\,,
\end{equation}
where
\begin{equation}\label{BPHI}
\Phi_{a_1\cdots a_s}={\phi}_{a_1\cdots
a_s}-\sum_{k=1}^{[s/2]}\alpha^s_{k}\eta_{(a_1 a_2}\cdots
\eta_{a_{2k-1}a_{2k}}\nabla_{a_{2k+1}}\cdots\nabla_{a_s)}\Box^{k-1}\widetilde{R}\,,
\end{equation}
\begin{equation}\label{}
    \widetilde{R}=\ast\sum_{k=0}^\infty(-1)^k\nabla_{a_1}\cdots
    \nabla_{a_k}\bar\phi^{a_1\cdots a_k}\,,
\end{equation}
and  the coefficients $\alpha^{s}_{k}$ are determined  by the
recurrent relations
\begin{equation}
\alpha^s_1=\frac1{d+2s-4}\,, \qquad
\alpha^s_k=\frac{\alpha^s_{k-1}}{2k+2-d-2s}\,.
\end{equation}
The round brackets in (\ref{BPHI}) mean summation over all
inequivalent permutations of the indices enclosed.   Notice that
the SD constraints (\ref{GON1}) involve bilinear combinations of
sources, cf. (\ref{R}). On account of (\ref{USF-on-An}) we can
replace (\ref{GON1} ) with the equivalent constraints
\begin{equation}\label{GON2}
    D\bar e_a-\sum_{s=0}^\infty \bar\phi^{a_1\cdots a_s}\nabla_a\phi_{a_1\cdots
    a_s}\approx 0\,.
\end{equation}
Equations (\ref{USF-on-An}), (\ref{GON3}), (\ref{GON2}) define an
equivalent basis of the SD constraints that are at most linear in
sources. These relations define the Lagrange structure for the
on-shell unfolded formulation of the scalar field.

From the viewpoint  of the constrained Hamiltonian dynamics,  the
transition from the off-shell to the on-shell formulation is the
Hamiltonian reduction by the second class constraints
\begin{equation}
\bar\varphi^{a}_{\phantom{a}aa_{1}\ldots a_s}\approx 0,\qquad
\varphi^{a}_{\phantom{a}aa_1\cdots
a_s}+\nabla_{a_1}\cdots\nabla_{a_{s}}R\approx 0\,,
\end{equation}
and the canonical Poisson brackets of the traceless fields and
sources $\{\phi_{a_1\cdots a_s},\bar\phi^{b_1\cdots b_k}\}$ appear
as Dirac's brackets in the reduced phase space.

Let us comment on the unusual property of the SD constraints
derived above both in the off-shell and on-shell formulations. The
unfolded field equations involve infinitely many fields, as the
jet prolongation is infinite in this formalism. The SD extensions
(\ref{USF-on-An}), (\ref{GON3}), (\ref{GON2}) of the field
equations involve a finite number of derivatives for every single
field and source, though the order of derivatives increases with
the order of jet. This fact directly follows from the structure of
the Hamiltonian generator (\ref{r}) for the gauge transformations
(\ref{gtr}) of the contact system. As we will see in the next
section this property is unavoidable for the Lagrange anchor in
the unfolded formalism.

Let us check that the quantization of the unfolded system results
in the standard Feynman's probability amplitude for the free
massless scalar field. Given the off-shell constraints (\ref{ew}),
(\ref{phi}), (\ref{dtphi}), (\ref{RR}), the generalized SD
equations for the quantum probability amplitude (\ref{SDE}) are
readily constructed following the general prescriptions of Section
3. They are satisfied by the functional
\begin{equation}\label{PAUSF}
\Psi\sim e^{\frac i\hbar S[\varphi]}\delta(de+\omega\wedge
e)\delta(d\omega+\omega\wedge\omega)
\prod_{s=0}^\infty\delta(D\varphi_{a_1\cdots
a_{s}}-e^a\varphi_{aa_1\cdots a_s})\,,
\end{equation}
where
\begin{equation}\label{S}
S[\varphi]=\frac{1}{2}\int_M d\varphi\wedge \ast d\varphi
\end{equation}
is the usual action for the free massless scalar field in general
gravitational background.

As is seen, no quantum fluctuations arise for the holonomic
constraints, i.e., the equations without source extensions. With
the $\delta$-functions of the contact equations all the auxiliary
fields $\varphi_{a_1\cdots a_s}$  can be integrated out in the
path integral. The vielbein and the Lorentz connection also remain
classical fields with no quantum fluctuations. As they are pure
gauge, their contribution is eliminated from the path integral by
imposing gauge-fixing conditions. This yields  the standard
Feynman's amplitude for the original scalar field. The fact that
the Schwinger-Dyson equations define a ``smeared'' probability
amplitude (rather than a $\delta$-distribution) confirms that the
constructed Lagrange anchor is nontrivial.

\section{Algebraic Lagrange anchors}

The on-shell unfolded representation for the dynamics of scalar
field (\ref{ew}), (\ref{phi}) suggests to consider a general FDA
generated by finite or countable sets of $1$-forms $\{\theta^a\}$
and $0$-forms $\{\phi^i\}$. With this field content, the general
FDA equations (\ref{Q}) take the form
\begin{equation}\label{FDA}
    d\phi^i=A^i_a(\phi)\theta^a\,,\qquad
    d\theta^a=C^a_{bc}(\phi)\theta^b\wedge\theta^c\,,
\end{equation}
where $A$'s and $C$'s are smooth functions of scalar fields
$\phi^i$. In view of (\ref{CC}) these structure functions obey the
relations
\begin{equation}\label{CCC}
[A_a,A_b]=C_{ab}^c A_c\,,\qquad
C^d_{[ab}C^n_{c]d}+\partial_iC^n_{[ab}A_{c]}^i=0\,,
\end{equation}
where the values $A_a=A_a^i(\phi)\frac{\partial}{\partial \phi^i}$
are viewed as vector fields on the $\phi$-space and the square
brackets stand for antisymmetrization of the indices enclosed. For
linearly independent vector fields $A_a$ the second equation in
(\ref{CCC}) follows from the first one by the Jacobi identity for
the Lie bracket of vector fields. From the viewpoint of the
target-space geometry, relations (\ref{CCC}) define a Lie
algebroid with anchor $A$.

The local BRST cohomology in the theories of type (\ref{Q}) has
been recently studied in \cite{BG}.  It was shown that under
certain assumptions this cohomology is isomorphic to the
$Q$-cohomology on the target space $\mathcal{M}$. In particular,
every Lagrange anchor was proven to be equivalent to the anchor
without space-time derivatives of fields and sources (the
algebraic Lagrange anchor). The proof assumed implicitly that the
order of the derivatives is bounded from above for all fields and
sources. This assumption does not follow from the fact that the
order of derivatives is finite for every single field or source,
whenever the number of fields is infinite. Hence, only two
possibilities are admissible: (i) the Lagrange anchor
(\ref{USF-on-An}) is equivalent to some algebraic anchor or (ii)
the equivalence class of the Lagrange anchor (\ref{USF-on-An})
does not have any representative with bounded order of
derivatives. Below, we prove that the on-shell unfolded
representation does not admit an algebraic Lagrange anchor
whenever $d>3$. This means that the unbounded order of derivatives
in (\ref{USF-on-An})-(\ref{GON3}) has no alternative.

\begin{prop}\label{T61}
The system of differential equations (\ref{FDA}) admits no
algebraic Lagrange anchor (except zero) whenever $d >3$.
\end{prop}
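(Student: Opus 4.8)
The plan is to reduce the statement to a pure form-degree count. First I would fix the phase space of the Schwinger--Dyson extension exactly as in Section~4: the fields $\phi^i$ are $0$-forms and the $\theta^a$ are $1$-forms, so the conjugate sources $\bar\phi_i$ and $\bar\theta_a$ must be $d$-forms and $(d-1)$-forms respectively, in order that the symplectic potential $\int_M(\delta\phi^i\wedge\delta\bar\phi_i+\delta\theta^a\wedge\delta\bar\theta_a)$ be well defined. By the very definition (\ref{TT}), the Lagrange anchor is the coefficient of the terms linear in the sources in the SD extension of the field equations
\begin{equation}\label{GF}
G^i=d\phi^i-A^i_a\theta^a\,,\qquad F^a=d\theta^a-C^a_{bc}\theta^b\wedge\theta^c\,,
\end{equation}
which are homogeneous $1$- and $2$-forms on $M$. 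The requirement that the anchor be \emph{algebraic} means precisely that the linear-in-source terms added to $\mathcal{G}^i$ and $\mathcal{F}^a$ are assembled from the fields $\phi,\theta$, the sources $\bar\phi,\bar\theta$, and the structure functions $A,C$ using only wedge multiplication and index contractions: no exterior derivative, and, since the bare FDA (\ref{FDA}) carries no metric, no Hodge operator either.

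The key observation I would exploit is that every admissible building block carries a definite, non-negative form degree, and that this degree is additive under the wedge product. The only objects that carry a source are $\bar\phi_i$ (degree $d$) and $\bar\theta_a$ (degree $d-1$), so any monomial linear in the sources has form degree at least $d-1$. On the other hand, the linear-in-source term that the anchor contributes to $\mathcal{G}^i$ must be a $1$-form and the one contributed to $\mathcal{F}^a$ must be a $2$-form. Comparing degrees, a nonzero algebraic source term in $\mathcal{G}^i$ would force $d-1\le 1$ (i.e.\ $d\le 2$), while a nonzero one in $\mathcal{F}^a$ would force $d-1\le 2$ (i.e.\ $d\le 3$).

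Putting these two bounds together, for $d>3$ the form degree of either source strictly exceeds the degree of every field equation in (\ref{GF}), so no algebraic term linear in the sources can be written down at all, and the anchor is forced to vanish identically. I would also record that this bound is sharp: at $d=3$ the $(d-1)$-form source $\bar\theta_a$ has degree $2$ and may legitimately enter the $2$-form extension of $F^a$, which is exactly why the hypothesis reads $d>3$. Finally, I would contrast this with the scalar model of Section~4, where a nonzero anchor \emph{does} exist precisely because the on-shell frame supplies a Hodge operator $\ast$ that lowers the degree of the $d$-form source $\bar\varphi$ down to a $0$-form (cf.\ (\ref{r})), together with covariant derivatives that rebuild the tensor structure; these are the two ingredients forbidden to an algebraic anchor on the bare FDA, and their absence is what the degree count converts into an obstruction.

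The point needing the most care, and the main obstacle, is justifying that the list of admissible building blocks is genuinely exhausted by wedge products of fields, sources and structure functions. I would have to argue that in the absence of any extra geometric data on $\mathcal{M}$ there is no degree-lowering algebraic operation at our disposal---in particular no $\ast$ and no interior product along a distinguished vector field---that could smuggle a high-degree source into a low-degree equation, and that index contractions, being degree-preserving, cannot circumvent the count. Once locality (no derivatives) and the pure-FDA field content are taken at face value, the degree bookkeeping above is rigorous and the conclusion $V=0$ for $d>3$ is immediate.
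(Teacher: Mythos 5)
There is a genuine gap here, and it lies in your characterization of what an \emph{algebraic} Lagrange anchor is. Algebraic means only that the anchor involves no space-time derivatives of the fields and sources; it does not restrict the anchor to wedge-product couplings of the sources regarded as $d$- and $(d-1)$-forms. The symplectic pairing between fields and sources already requires a volume form $v$ on $M$ (this is how the paper sets up the phase space in its proof), and with $v$ at hand the sources may equivalently be regarded as scalars $\bar\phi_i$ and vectors $\bar\theta_a^\mu$; this dualization is an ultralocal, purely algebraic operation, not a differential one, so nothing in the definition forbids it. Consequently, nonzero linear-in-source terms can be written down in \emph{any} dimension, namely the component couplings of the paper's ansatz (\ref{FFAn}),
\begin{equation*}
\begin{array}{l}
\displaystyle d\phi^i-A^i_a\theta^a+\bigl(U_{\mu\nu}^{ia}\,\bar\theta^{\nu}_{a}+Y_{\mu}^{ij}\,\bar\phi_{j}\bigr)dx^\mu\approx 0\,,\\[3mm]
\displaystyle d\theta^a-C^a_{bc}\theta^b\wedge\theta^c+\bigl(V_{\mu\nu\lambda}^{ab}\,\bar\theta_b^{\lambda}+W_{\mu\nu}^{ai}\,\bar\phi_i\bigr)dx^\mu\wedge dx^\nu\approx 0\,,
\end{array}
\end{equation*}
with $U,Y,V,W$ arbitrary functions of $\phi^i,\theta^a_\mu$. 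Your form-degree bookkeeping is blind to all of these, because in your language they involve exactly the dualization you declared inadmissible. But these are precisely the candidates a proof of the proposition must eliminate; your argument proves only the nearly vacuous statement that no anchor can be assembled from wedge products of the undualized source forms, which is strictly weaker than the claim.

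The symptom of the gap is that your proof never invokes the defining property of a Lagrange anchor, the involution condition (\ref{anchor}); a correct proof cannot avoid it, since the component couplings above exist for all $d$ and what fails for $d>3$ is their involution, not their existence. The paper substitutes the ansatz into (\ref{anchor}), finds that closure of the brackets forces the operator identities (\ref{LTB1})--(\ref{LTB3}) (e.g.\ $Y_{\mu}^{ij}\partial_{\nu}+Y_{\nu}^{ji}\partial_{\mu}=0$), and only then does the dimension enter: when four pairwise distinct space-time indices are available, i.e.\ $d>3$, these identities force $Y=U=W=V=0$. The same blind spot shows up in your sharpness remark: at $d=3$ you exhibit a term $M^{ab}\bar\theta_b$ that has the right degree to enter the $2$-form equation, but you never check that it can be completed to a first-class system, which is the actual nontrivial content (the paper instead exhibits the Lagrangian $B\wedge dH$ models, whose canonical anchor is automatically consistent). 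In short: the obstruction in Proposition \ref{T61} is an involution obstruction, not a degree obstruction, and the degree count cannot substitute for it.
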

\begin{proof} Let $x^\mu$ be local coordinates on $M$.
Denote by $\bar\phi_i$ and $\bar\theta _a$ the conjugate sources.
It is convenient to think of $\bar\phi_i$ and $\bar\theta _a$ as
the sets of scalar and vector fields on $M$. The canonical
symplectic form on the phase space of fields and sources reads
\begin{equation}\label{}
\Omega=\int_M v(\delta\phi^i\wedge\delta\bar\phi_i+\delta
\theta^a_\mu\wedge \delta \bar\theta_a^\mu)\,,
\end{equation}
where $v$ is a suitable volume form on $M$ and $\wedge$ stands for
the exterior product of variational differentials. The most
general SD extension of equations (\ref{FDA}) by means of an
algebraic Lagrange anchor looks like
\begin{equation}\label{FFAn}\begin{array}{l}
\displaystyle  d \phi^i-A^i_a\theta^a+\left(U_{\mu\nu}^{ia}
\bar\theta^{\nu}_{a} +Y_{\mu}^{ij}\bar\phi_{j}\right)dx^\mu+\cdots
=0\,,\\[5mm]
\displaystyle d\theta^a-C^a _{bc}\theta^b\wedge\theta^c+\left(
V_{\mu\nu\lambda}^{ab}\bar\theta
_b^{\lambda}+W_{\mu\nu}^{ai}\bar\phi_i \right)dx^\mu\wedge dx^\nu
+ \cdots =0\,,
\end{array}\end{equation}
where the structure  functions $U$, $Y$, $V$,  $W$ depend on
$\phi^i$ and $\theta^a_\mu$, and the dots stand for possible terms
of higher degrees in sources. Substituting the ansatz (\ref{FFAn})
to the defining relations (\ref{anchor}) for the Lagrange anchor,
we find that a necessary condition for these relations to be
satisfied is that the following differential operators have to
vanish:
\begin{equation}\label{LTB1}
Y_{\mu}^{ij}\partial^{}_{\nu}+ Y_{\nu}^{ji}\partial^{}_{\mu}=0\,,
\end{equation}
\begin{equation}\label{LTB2}
W{}_{\mu\nu}^{ai}\partial^{}_{\tau}+
U_{\tau[\mu}^{ia}\partial^{}_{\nu]}=0\,,
\end{equation}
\begin{equation}\label{LTB3}
V_{\mu\nu[\tau}^{ab}\partial^{}_{\sigma]}+
V_{\tau\sigma[\mu}^{ba}\partial^{}_{\nu]}=0\,.
\end{equation}
For $d>3$ this is only possible if all the structure functions are
equal to zero. Indeed, setting $\mu\neq\nu$ in (\ref{LTB1}), we
conclude that the first and second summands vanish separately,
therefore $Y^{ij}_\mu=0$. For $\tau\neq\mu$ and $\tau\neq\nu$
equation (\ref{LTB2}) yields $W_{\mu\nu}^{ai}=0$ for all $\mu$,
$\nu$, and hence $U_{\tau\mu}^{ia}=0$. Finally, we let the indices
$\mu,\nu,\tau,\sigma$ to be pair-wise different. Then it follows
from (\ref{LTB3}) that the corresponding structure functions
$V_{\mu\nu\tau}^{ab}$ are equal to zero. If now two space-time
indices coincide (e.g. $\mu=\tau$), then we can always choose
$\nu\neq\mu$ and $\nu\neq\sigma$, which yields
$V_{\mu\sigma\mu}^{ab}=0$ for all $\mu,\sigma$. Thus, for all
values of indices $V_{\mu\nu\tau}^{ab}=0$. As is seen the crucial
point of our argumentation is the possibility to choose four
different values for the space-time indices, which is only
possible if $d>3$.
\end{proof}

As an immediate consequence of Proposition \ref{T61} we have that
the \textit{on-shell} unfolded formulation for the scalar field
theory (\ref{ew}), (\ref{phi}) admits no algebraic Lagrange
anchor. The analysis above can be easily extended to the FDAs
generated by forms of arbitrary degrees.

\begin{prop}\label{p2} Let $p$ be the highest degree of the forms entering equations (\ref{Q}).
Then these equations admit no algebraic Lagrange  anchor  whenever
$d \geq 2p+2$.
\end{prop}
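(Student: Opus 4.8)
The plan is to follow the strategy of Proposition \ref{T61} verbatim, keeping track only of the form-degrees. A generator $w$ of form-degree $q$ carries a field equation $dw-Q=0$ that is a $(q+1)$-form, and a conjugate source that is a $(d-q)$-form; dualizing with the volume form, this source becomes a totally antisymmetric rank-$q$ contravariant tensor with $q$ upper space-time indices (for $q=0,1$ these are precisely the scalar $\bar\phi_i$ and the vector $\bar\theta^\nu_a$ of the proof of Proposition \ref{T61}). First I would write the most general algebraic (derivative-free), source-linear extension of equations (\ref{Q}), organized by the degree $q'$ of the field whose source is inserted. The term coupling the degree-$q$ equation to a degree-$q'$ source has the schematic form
\begin{equation}
\big(K_{\mu_1\cdots\mu_{q+1}\,\nu_1\cdots\nu_{q'}}\,\bar w^{\nu_1\cdots\nu_{q'}}\big)\,dx^{\mu_1}\wedge\cdots\wedge dx^{\mu_{q+1}}\,,
\end{equation}
with internal indices suppressed; the structure function $K$ is antisymmetric in the $q+1$ form indices $\mu$ (from the wedge) and in the $q'$ source indices $\nu$ (from the source), and depends algebraically on the $0$- and $1$-form fields. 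This is the exact analogue of the $Y,U,V,W$ terms in (\ref{FFAn}).

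Next I would substitute this ansatz into the defining relation (\ref{anchor}) and extract, exactly as in the passage leading to (\ref{LTB1})-(\ref{LTB3}), the leading necessary conditions. Each such condition pairs a structure function $K^{(q,q')}$ with its transpose $K^{(q',q)}$, each multiplied by a single partial derivative $\partial_\rho$; the lone derivative is produced by the exterior differential $d$ present in the FDA equations, and it is antisymmetrized against the source indices just as $\partial_\sigma$ is antisymmetrized with $\tau$ in (\ref{LTB3}). The content of this step is purely the bookkeeping of antisymmetrized space-time indices: the identity attached to the $(q,q')$ coupling involves $q+1$ form indices, $q'$ source indices, and one derivative index, i.e.\ a total of $q+q'+2$ space-time indices.

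The forcing argument is then identical to that of Proposition \ref{T61}. Since the operators $\partial_\mu$ are linearly independent, choosing the $q+q'+2$ space-time indices pairwise distinct isolates a single $\partial_\rho$ and kills the corresponding component of $K^{(q,q')}$; the residual configurations with coincident indices are disposed of by the same elementary descent (``choose a still-free value for one remaining index'') used at the end of the proof of Proposition \ref{T61}. This step requires $q+q'+2$ distinct coordinate directions to be available, i.e.\ $d\geq q+q'+2$. Because every generator has degree at most $p$, one has $q,q'\leq p$, hence $q+q'+2\leq 2p+2\leq d$ under the hypothesis, so the conditions can be solved for \emph{all} degree pairs simultaneously. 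Consequently every structure function vanishes and the algebraic anchor is trivial.

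The only genuinely new point, and the expected main obstacle, is establishing the precise antisymmetrization pattern of the operator identities for arbitrary $p$ directly from (\ref{anchor}), rather than merely transcribing (\ref{LTB1})-(\ref{LTB3}). The sharp bound originates entirely from the binding case $q=q'=p$, the self-coupling of the top-degree generators: there the $p+1$ wedge indices, the $p$ source indices, and the single derivative index must be made simultaneously distinct, which is possible precisely when $d\geq 2p+2$. For $p=1$ this reproduces $d\geq 4$ and recovers Proposition \ref{T61}.
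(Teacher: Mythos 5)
Your proposal is correct and follows precisely the route the paper intends: the paper gives no separate proof of Proposition \ref{p2}, saying only that ``the analysis above can be easily extended'' from Proposition \ref{T61}, and your degree-by-degree bookkeeping---$q+1$ wedge indices, $q'$ dualized source indices, and one derivative index, forcing each structure function to vanish once $d\geq q+q'+2$, with the binding self-coupling $q=q'=p$ yielding $d\geq 2p+2$---is exactly that extension, consistent both with the $p=1$ identities (\ref{LTB1})--(\ref{LTB3}) and with the paper's sharpness discussion via the $B\wedge dH$ models.
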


Notice that the low bound for $\dim M$ established by Proposition
\ref{p2} is sharp. For if $p< d < 2p+2$, we have the sequence of
Lagrangian models
$$
S=\int_M B\wedge dH
$$
for the $p$-form $H$ and the $(d-p-1)$-form $B$. The corresponding
equations of motion,
$$
dH=0\,,\qquad dB=0\,,
$$
being Lagrangian, admit the canonical Lagrange anchor (\ref{CLA}),
which is obviously algebraic.

\section{Concluding remarks}

The unfolded formalism has many remarkable properties, though the
general unfolded field equations are not Lagrangian.  In the
standard unfolded form, even the scalar field equations do not
have any Lagrangian. The Lagrange anchor can exist, however, for
non-Lagrangian dynamics. Any nontrivial Lagrange anchor gives a
quantization of the dynamics and can connect symmetries to
conserved currents. So, the existence of the Lagrange anchor can
compensate for the most important disadvantages of the unfolded
formalism related to the non-variational structure of the field
equations. In this paper, we studied the structure of Lagrange
anchors admitted by the unfolded formalism.

As we have seen, an important feature of the Lagrange anchor in
the unfolded formalism is that it is a differential operator, in
general. For every single field, the order of the operator is
finite. In the unfolded formalism, however, the set of fields is
infinite. For example, in the scalar field theory, the set of
fields includes the original scalar field together with its
infinite jet prolongation. The order of derivatives is growing in
the Lagrange anchor with the order of corresponding jet. In this
sense the order of derivatives is unbounded. We also see that the
equivalence class of the Lagrange anchor does not include any
representative without derivatives.

The explicit construction of the Lagrange anchor for the unfolded
dynamics of scalar field can be basically repeated for any other
model admitting an equivalent formulation in terms of the finite
set of fields with any Lagrange anchor, canonical or not. Some
interesting field equations are formulated in the unfolded form
from the outset, not being equivalent to any local field theory
with a finite set of fields. An important example of this type is
provided by the interacting higher-spin fields \cite{V1},
\cite{V2}, \cite{BCIV}. Our construction can be applied, in
principle, for finding the Lagrange anchor even in this case. The
free limit of the higher-spin equations is known to be equivalent
to a Lagrangian field theory with a finite set of fields for every
spin. Hence, our procedure is applicable for the unfolded
equations of higher spin free fields, and it should lead to a
nontrivial Lagrange anchor. The Lagrange anchor for the
interacting equations can be sought for from the basic relation
(\ref{anchor}) by a perturbation theory in the interaction
constant, but the existence is not guaranteed. Obstructions can
appear, in principle, to the existence of the perturbative
solution for the Lagrange anchor. If any obstruction appeared, it
could have sense to identify that, because it would mean that the
interacting higher spin fields cannot be quantized in the way
consistent with the free limit. If no obstructions appeared, the
Lagrange anchor would make possible the quantization of the
interacting higher-spin fields, that remains an unsolved problem
for many years.

\vspace{0.2cm}

\noindent\textbf{Acknowledgements.} We appreciate the valuable
help from E. Skvortsov in the issues related to the on-shell
unfolded formalism. We are thankful to N. Boulanger and P. Sundell
for discussions on various topics addressed in this paper. The
research was supported by the Federal Targeted Program under the
state contracts 02.740.11.0238, P1337, P22 and by the RFBR grant
09-02-00723-a. Hospitality of the Erwin Schr\"{o}dinger Institute
for Mathematical Physics (Vienna) is acknowledged.

\end{document}